\let\proof\relax
\theoremstyle{definition}
\theoremstyle{remark}
\DeclareMathOperator*{\argmax}{arg\,max}
\newcommand{\comment}[1]{}
\begin{document}
\title{An Incentive Mechanism for Trading Personal Data in Data Markets}
\titlerunning{An Incentive Mechanism for Data Markets}
% If the paper title is too long for the running head, you can set
% an abbreviated paper title here
%
\author{Sayan Biswas\inst{} \and
Kangsoo Jung\inst{} \and
Catuscia Palamidessi\inst{}}
\authorrunning{}
% First names are abbreviated in the running head.
% If there are more than two authors, 'et al.' is used.
%
\institute{Inria and \'{E}cole Polytechnique, France\\
\email{\{sayan.biswas,gangsoo.zeong\}@inria.fr}\\
\email{catuscia@lix.polytechnique.fr}
\url{}}

% Part of RIGHT running header
%
\maketitle              % typeset the header of the contribution
\begin{abstract}
With the proliferation of the digital data economy, digital data is considered as the crude oil in the twenty-first century, and its value is increasing. Keeping pace with this trend, the model of data market trading between data providers and data consumers, is starting to emerge as a process to obtain high-quality personal information in exchange for some compensation. However, the risk of privacy violations caused by personal data analysis hinders data providers' participation in the data market. Differential privacy, a de-facto standard for privacy protection, can solve this problem, but, on the other hand, it deteriorates the data utility. In this paper, we introduce a pricing mechanism that takes into account the trade-off between privacy and accuracy. We propose a method to induce the data provider to accurately report her privacy price and, we optimize it in order to maximize the data consumer's profit within budget constraints. We show formally that the proposed mechanism achieves these properties, and also, validate them experimentally.

\keywords{Data Market \and Differential Privacy \and Incentive Mechanism \and Game Theory.}
\end{abstract}

\section{Introduction}
Nowadays, digital data is becoming an essential resource for the information society, and the value of personal data is increasing. In the past, data broker companies such as Acxiom collected personal data and sold them to  companies that needed them. However, as the value of personal data is becoming 
clear to the data providers, and concern about their privacy is increasing among them, people are less and less willing to let their data to be collected for free. In this scenario, the model of \emph{data market} is starting to emerge, as a process to obtain high-quality personal information in exchange of a compensation. Liveen \cite{Liveen} and Datacoup \cite{Datacoup} are examples of prototypes  of  data market services, where the data providers can obtain additional revenue from selling their data, and the consumers can collect the desired personal data.

The problem of privacy violation by personal data analysis is one of the major issues in such data markets. As the population becomes more and more aware of the negative consequences of privacy breaches, such as the Cambridge Analytica scandal, people are reluctant to release their data, unless they are properly sanitised. In order to solve this problem, techniques like noise insertion \cite{dwork2014}, synthetic data \cite{bowen2019}, secure multi-party computation (SMC) \cite{volgushev2019}, and homomorphic encryption \cite{acar2018} are being actively studied. Differential privacy \cite{dwork2014}, a de-facto standard for privacy protection, is one of the techniques to prevent privacy violations in the data market. 

Differential privacy provides a privacy protection framework based on solid mathematical foundations,  and enables quantified privacy protection according to the amount of noise insertion. However, like all privacy-protection methods, it deteriorates the data utility. If the data provider inserts too much noise because of privacy concern, the data consumer cannot proceed with the data analysis with the required  performance. This trade-off between privacy and utility is a long-standing problem in differential privacy. The privacy protection and data utility depend on the amount of noise insertion while applying differential privacy, and the amount of noise insertion is determined by the noise parameter $\epsilon$. Thus, determining the appropriate value of the parameter $\epsilon$ is a fundamental problem in differential privacy. It is difficult to establish the appropriate $\epsilon$ value because it depends on many factors that are difficult to quantify, like the attitude towards privacy of the data provider, which may be different from person to person.

We propose an incentive mechanism to  encourage the data providers to join in the data market  and motivate them to share more accurate data. The amount of noise insertion depends on the data providers' privacy preference and the incentives provided to them by data consumers, and the data consumers decide on incentives to pay to the data provider by considering the profit to be made from the collected data. By sharing some of the consumers' profit with the data provider as incentive, the data provider can get fair prices for providing her data. 
%It allows us to determine a value of $\epsilon$ that can be satisfied by both the data providers and the consumers for differential privacy in data market. 
The proposed mechanism consists of the truthful price report mechanism and an optimization method within budget constraints. The truthful price report mechanism guarantees that the data provider takes the optimal profit when she reports her privacy price to the data consumer honestly. Based on a data provider’s reported privacy price, a data consumer can maximize her profit within a potential budget constraint. 

\subsection{Contribution}
The contributions of this paper are as follows:
\begin{enumerate}[label=(\roman*)]
\item Truthful price report mechanism: We propose an incentive mechanism that guarantees that the data provider maximizes her benefit when she reports her privacy price honestly.
\item Optimized incentive mechanism within the budget constraints: We propose an optimization method to maximize the data consumer’s profit and information gain in the setting where the data consumer has a fixed financial budget for data collection.
\item Optimized privacy budget splitting mechanism:  We propose a method of splitting the privacy budget for the data providers, that allows them to maximize her utility-gain within a fixed privacy budget, in a multiple data consumer environment.
\end{enumerate}
The properties of our methods are both proved formally and validated through experiments. 

\subsection{Structure of the paper}
The structure of this paper is as follows: we explain the related works and preliminaries in Sections 2 and 3, respectively. We describe the proposed incentive mechanism in Section 4 and validate the proposed incentive mechanism through experiments in Section 5. Our conclusion and some potential directions of future work are discussed in Section 6.

\section{Related work}
\subsection{Methods for choosing $\epsilon$}
In differential privacy concept, parameter $\epsilon$ is the knob to control the privacy-utility trade off. The smaller the $\epsilon$, the higher is the privacy protection level and the more it deteriorates the data utility. Conversely, a larger $\epsilon$ decreases the privacy protection level and enhances the data utility. However, there is no gold standard to determine the appropriate value of $\epsilon$. Apple has been promoting the use of differential privacy to protect user data since iOS 10 was released, but the analysis of \cite{tang2017} showed the $\epsilon$ value was set at approximately 10 without any particular reason. The work of \cite{lee2011} showed that the privacy protection level set by an arbitrary $\epsilon$ can be infringed by inference using previously disclosed information and proposed an $\epsilon$ setting method considering posterior probability. This matter is the main factor that undermines the claim that personal information is protected by differential privacy. Much research have been conducted to study and solve this problem \cite{chen2016,ligett2012,xiao2013,nissim2012}. Although a lot of research is being done in this area, the problem of determining a reasonable way of choosing an optimal value for $\epsilon$ still remains open, as there are many factors to consider in deciding the value of $\epsilon$, and more studies are still needed. In this paper, we propose a technique to determine an appropriate value of $\epsilon$ by setting a price of the privacy of the data provider.

\subsection{Pricing mechanism}
One of the solutions to find an appropriate value of $\epsilon$ is to price it according to the data accuracy \cite{hsu2014,ghosh2015,dandekar2012,roth2012,fleishcer2012,li2018}. In \cite{hsu2014}, strength of the privacy guarantee and the accuracy of the published results are considered to set the $\epsilon$ value, and a simple $\epsilon$ setting model that can satisfy data providers and consumers was suggested. In \cite{ghosh2015}, the author proposed a compensation mechanism via auction in which data providers are rewarded based on data accuracy and data consumer's budget when they provide data with differential privacy. It is the most similar work to our study. The main differences between our paper and Ghosh and Roth's work are as follows:
\begin{enumerate}[label=(\roman*)]
    \item We define a truthful price report mechanism that a data provider get a best profit when she reports her privacy price honestly, and prove it.
    \item We propose an optimized incentive mechanism to maximize the data consumer's profit with a fixed expense budget, and a privacy budget splitting method to maximize the data provider's utility-gain in a multi-data consumer environment. 
\end{enumerate}

In \cite{fleishcer2012} the authors design a mechanism that can estimate statistics accurately without compromising the user’s privacy. They propose a Bayesian incentive and privacy-preserving mechanism that guarantees privacy and data accuracy.  The study of \cite{li2018} proposes a Stackelberg game to maximize mobile users who provides their trajectory data. 

Several techniques for pricing data assuming a data market environment have been studied in \cite{nget2018,oh2019,li2013,aperjis2012,jung2019,krehbiel2019,zhang2021,jorgensen2015}.

In \cite{nget2018} the authors suggested a data pricing mechanism to make the balance between privacy and price in data market environment. In \cite{oh2019}, the authors propose the data market model in the IoT environment and show the proposed pricing model has a global optimal point. In \cite{li2013} the authors proposed a theoretical framework for determining prices to noisy query answer in the differentially private data market. However, this research cannot flexibly reflect the requirements of the data market. In the study of \cite{jung2019}, the author proposed an $\epsilon$-choosing method based on Rubinstein bargaining and assumes a market manager that mediates a data provider and consumer in the data trading.

It is realistic to consider personal data as a digital asset, and reasonable to attempt to find a bridge between privacy protection level and price according to the value of $\epsilon$ in differential privacy, as has been done in this paper. Existing studies are attempting to find an equilibrium between data providers and consumers under the assumption that both are reasonable individuals. In this paper, we follow a research direction similar to existing studies, and focus on the incentive mechanism that motivates a data provider report her privacy price honestly. In particular, we consider that the value of differentially private data increases non-linearly with respect to the increase of the value of $\epsilon$.

\section{Preliminaries}
In this section, we explain the basic concepts of differential privacy. Differential privacy is a mathematical model that guarantees the privacy protection at a specified level $\epsilon$. For all datasets $D_1$ and $D_2$ differing exactly at a single element, it is defined to satisfy $\epsilon$-differential privacy, if the probability distribution difference of the result of a specific query $K$ on two databases is less than or equal to the threshold $e^{\epsilon}$. The definition  of the differential privacy is as follows:

\begin{definition}[Differential privacy\text{\cite{dwork2014}}]
\label{dp}
A randomized function $\mathcal{K}$ provides \emph{$\epsilon$-differential privacy} if all datasets, $D_1$ and $D_2$, differing by only one element, and all subsets, $S \subseteq$ \text{Range}($\mathcal{K}$),
\begin{equation*}
\mathbb{P}[\mathcal{K}(D_1) \in S] \leq e^{\epsilon} \mathbb{P}[\mathcal{K}(D_2) \in S]
\end{equation*}
\end{definition}

The Laplace mechanism~\cite{dwork2014} is one of the most common methods for achieving the $\epsilon$-differential privacy.

One of the important properties of differential privacy is the compositionality that allows query composing to facilitate modular design \cite{dwork2014}.

\begin{description}
\item[Sequential compositionality] For any database $D$, let we query on the randomization mechanism ${K_1}$ and ${K_2}$ which is independent for each query. The results of ${K_1}(D)$ and ${K_2}(D)$ whose guarantees are the $\epsilon_1$ and $\epsilon_2$-differential privacy, is  ($\epsilon_1 + \epsilon_2$)-differentially private.
\item[Parallel compositionality] Let $A$ and $B$ be the partition of any database $D (A \cup B = D, A \cap  B =  \phi)$. Then, the result of the query on the randomization mechanism ${K_1}(A)$ and ${K_2}(B)$, is the max($\epsilon_1, \epsilon_2$)-differentially private.
\end{description}

%The sequential and the parallel composition properties are the building blocks to implement various complex algorithms while satisfying $\epsilon$-differential privacy.

Recently, a variant of differential privacy called \emph{local differential privacy} has been proposed \cite{erlingsson2014,cormode2018,thong2016,kasiviswanathan2011}. In this model, data providers obfuscate their own data by themselves. Local differential privacy has an advantage that it does not need a trusted third party to satisfy the differential privacy. The properties of parallel and sequential compositionality hold for the local model as well.

In the rest of this paper, we consider the local model of differential privacy. 

\section{Incentive mechanism for data markets}

\subsection{Overview of the proposed technique}\label{sec:overview}
The data market aims at collecting personal data legally with the consent of the provider. A data provider can sell her own data and get paid for it, and a data consumer can collect the personal data for analysis by paying a price, resulting in   a win-win situation. 

Naturally, the data consumer wants to collect personal data as accurately as possible at the lowest possible price, and the data provider wants to sell her data at a price as high as possible while protecting sensitive information. In general, every effective protection technique affects the utility of the data negatively. In the particular case of differential privacy, the levels of utility and privacy are determined by the parameter $\epsilon$; thus, the data price is affected directly by the value of $\epsilon$. 

Determining  the appropriate value of $\epsilon$ and the actual price of the data are critical to the success of the data market. However this is not an easy task,  also because each data provider has different privacy needs \cite{kasiviswanathan2011}. 

We propose an incentive mechanism to find the price of the data and the value of $\epsilon$ that can satisfy both the data provider and the data consumer. The proposed method consists of two parts: an incentive mechanism encouraging the data provider to report her privacy price honestly to the data consumer, and an optimization scheme to maximize both the data consumer and provider's profit within a budget constraint. 

We consider a scenario with  $n$ data providers, $u_1,\ldots,u_n$, and $m$ data consumers, $D_1,\ldots,D_m$, and where each provider and consumer proceeds with the deal independently (we use the term ``data provider'' and ``data producer'' interchangeably, in the same sense). The term ``$\epsilon$ unit price'' 
(e.g., $1$\$ per $\epsilon$ value $0.1$) will be used to express the price, where $\epsilon$ is the parameter of differential privacy, which is a measure of the accuracy of information.  We recall that, as $\epsilon$ increases, the data becomes less private and more information can be obtained from it, and vice versa. Thus, the price per unit $\epsilon$ represents the ``value'' of the provider's information\footnote{The $\epsilon$ unit price can be of any form, including a monetary one. The method we propose is independent from the nature of the price, so we do not need to specify it.}. The price of $\epsilon$ is expected to differ from one data provider to another, because each individual has a different privacy need. We denote the $\epsilon$ unit price reported by $u_i$ as $p_i$ and  her true $\epsilon$ unit price as $\pi_i$. 

\begin{figure}[htbp]
\centerline{\includegraphics[width=0.6\textwidth]{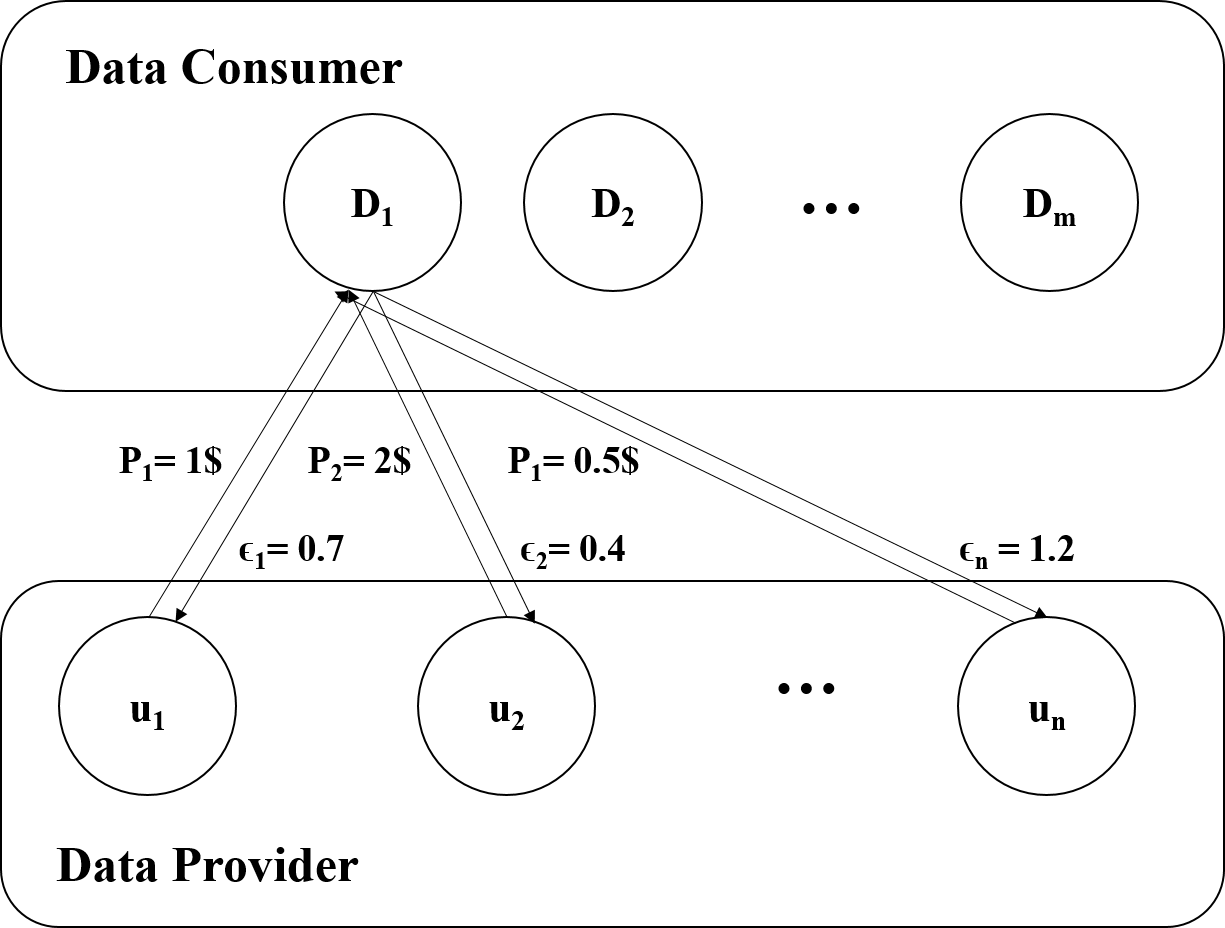}}
\caption{ An example of data trading process. In this figure, $u_i$ means the $i^{th}$ data provider and $D_j$ means the $j^{th}$ data consumer.}
\label{fig2}
\end{figure}

Fig.~\ref{fig2}  illustrates how the   process works.  At first, every data consumer broadcasts a function $f$ to the data providers, which represents the amount of data (expresses in $\epsilon$ units) the consumer  is willing to buy for a given $\epsilon$ unit price. Each  consumer has her own  such function, and it can differ from one consumer to another. We will call it \emph{$\epsilon$-allocating function}. 
We assume $f$ to be monotonically decreasing, as the consumers naturally prefer to buy more data from those data producers who are willing to offer them for less.  Note that the product  $p_i f(p_i)$ represents the total amount that will be payed by the data producer to the  consumer if they agree on the trade. 
The function  $f$ however has also a second purpose: as we shown in Section~\ref{sec:truth}, it is designed to encourage providers to demand the price that they really consider the true price of their privacy, rather than asking for more. 

Then, thanks to the truthful price report mechanism (cf. Section~\ref{sec:truth}), the data providers report the prices of their data honestly to the data consumers in accordance with the published $f$. In the example in Fig.~\ref{fig2}, $u_1$ reports her $\epsilon$ price per $0.1$ as $1$\$ and $u_2$ reports her $\epsilon$ price per $0.1$ as $2$\$. Finally, the data consumer checks the price reported by the data provider and determines the total price and value of $\epsilon$ to be obtained from each provider using $f$. In this example, the data consumer $D_1$ determines $\epsilon_1$ to be 0.7 and $\epsilon_2$ to be 0.4. 

Then, the data providers select the  consumers to whom to sell their data in order to maximize their profits, and confirm with them the values of their $\epsilon$ and the total price they would receive. In the example in Fig.~\ref{fig2}, $D_1$ pays 7\$ to $u_1$ and 8\$ to $u_2$. Finally, the data providers add noise to their data based on the determined $\epsilon$ and share the sanitized data with the respective consumers, and the  consumers pay the corresponding prices to the  providers. We assume that data providers and consumers keep the promise of the value of $\epsilon$ and 
compensation decided in the deal, once confirmed. 

This process can be repeated until the data consumers exhaust all their budget or achieve the targeted amount of information. The task of allocating a suitable budget in each round and the how to determine the amount of needed information are also important topics, but they are out of the scope of this paper and are left for future work.

\subsection{Truthful price report mechanism}\label{sec:truth}
For the correct functioning of the data trading, the data provider should be honest and demand her true privacy price. However, she may be motivated  to report a higher price, in the hope to persuade the data consumer that the information is ``more valuable'', and be willing to pay more. Note also that the true privacy price of each data provider is a personal information that only the provider herself knows and is not obliged to disclose.  

To solve this problem, we propose a truthful price report mechanism to ensure that the data providers report their $\epsilon$ unit prices  honestly. The purpose of the mechanism is to provide incentive so that the  providers are guaranteed to get the greatest profit when they report their true price.

When the data provider reports her price $p_{i}$, the data consumer determines the amount of $\epsilon$ to purchase using  $f(p_{i})$, where  $f$ is the $\epsilon$-allocating function introduced in  Section~\ref{sec:overview}.  We recall that
$f$ is a monotonically decreasing function, chosen by the consumer. 
We assume that the domain of $f$, the $\epsilon$ price unit, is normalized to  take  values in the interval  $[0,1]$.  
The total price for the data  estimated by the consumer is the product 
of the $\epsilon$ price unit and the amount to be purchased, namely, $p_{i}f(p_{i})$. To this value, the consumer adds an \emph{incentive} $\int_{p_{i}}^{\infty}f(z)\,dz$, the purpose of which is to make convenient for the data producer to report the true price (we assume that the data producer knows $f$ and the strategy of the consumer in advance). The consumer should of course choose $f$  so to be happy with the incentive. In particular, the incentive should be finite, so the contribution of $f(z)$ should vanish as  $z$ goes to ${\infty}$. An example of such a  function is illustrated in Fig.~\ref{fig3}. 

Thus the data consumer sets the \emph{offer}  $\mu(p_{i})$ to the provider $u_i$ as follows:

\begin{definition}[Payment offer]\label{incentive}
The \emph{offer} $\mu(p_{i})$ is defined  as: 
\begin{equation*}
\mu(p_{i})=p_{i}f(p_{i})+\int_{p_{i}}^{\infty}f(z)\, dz
\end{equation*}
\end{definition}

\begin{figure}[htbp]
\centerline{\includegraphics[width=0.5\textwidth]{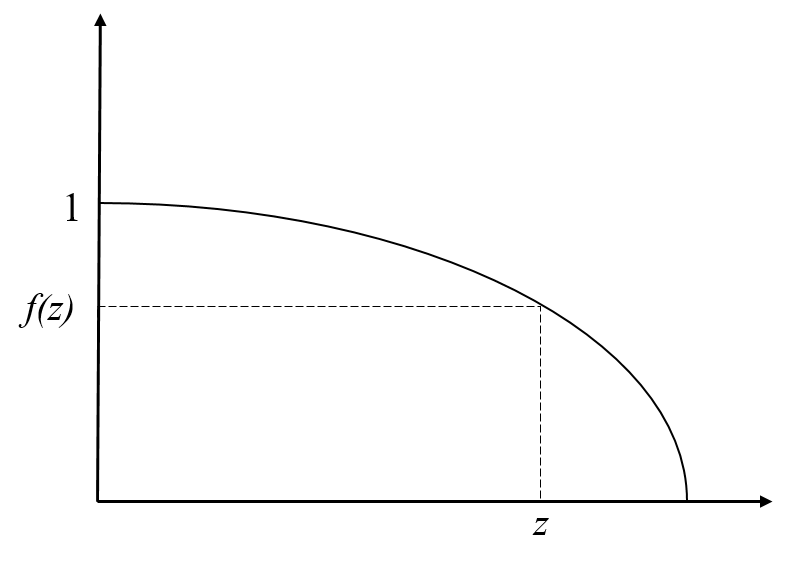}}
\caption{ An example of a monotonically decreasing function $f(z)$. Let $c$ be a parameter representing the “reported value-to-admitted $\epsilon$ value” ratio. For $z\geq 0$, we set $f(z)$ as 
$f(z)=ln(e-cz)$ if  $(e-cz)\leq 1$, and $f(x)=0$ otherwise. }
\label{fig3}
\end{figure}
We now illustrate how this strategy achieves its purpose of convincing the consumer to report her true price. We start by defining the \emph{utility} that the data provider obtains by selling her data  as the difference between the offer and the true price of her data, represented by the product of the true $\epsilon$ unit price and the amount to be sold, namely $\pi_{i} f(p_i)$: 

\begin{definition}[Utility of the data provider]
\label{utility}
The \emph{utility} $\rho(p_i)$, of the provider $u_i$, for the reported price $p_i$, is defined as:
\begin{equation*}
   \rho(p_{i})=\mu(p_{i})-\pi_{i}f(p_{i}) 
\end{equation*}
\end{definition}

We are now going to show that he proposed mechanism guarantees truthfulness. The basic reason is that each  provider $u_i$ 
achieves the best utility when reporting the true price. 
Namely,   
$\rho(\pi_{i})\geq \rho(p_{i})$ for any $p_{i}\in \mathbb{R^+}$, where we recall that $\pi_i$ is the true price of the provider $u_i$.  The only technical condition is that the function $f$ is monotonically decreasing. Under this assumption, we have the following results (see also Fig.~\ref{fig4} to get the intuition of the proof):

%To demonstrate that $i^{th}$ data provider maximizes her utility when she truthfully reports her price, we prove the following theorem.

\begin{figure}[htbp]
\centerline{\includegraphics[width=0.6\textwidth]{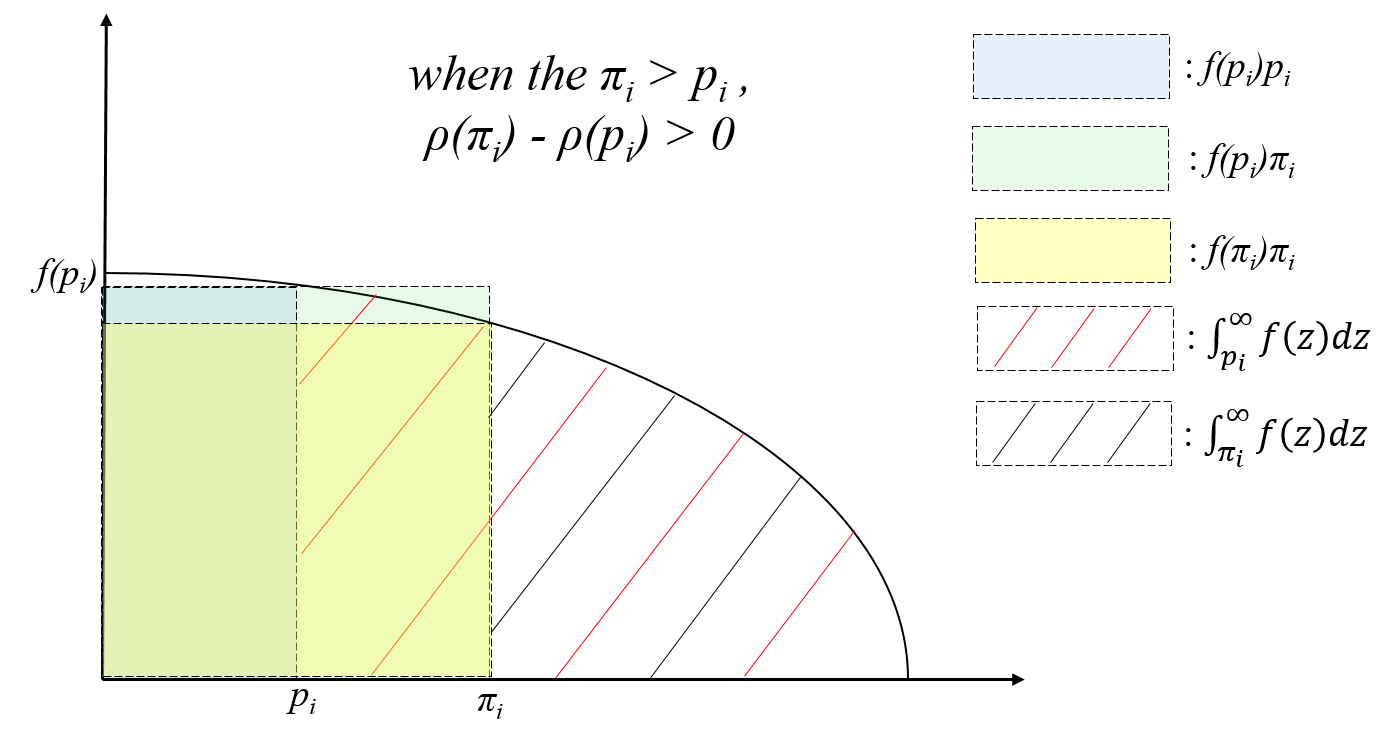}}
\caption{Graphical illustration of Theorem~\ref{mainthm}. We prove that $\rho(\pi_{i})$(blue hatching area) is always larger than $\rho(p_{i})$(blue rectangle area$+$red hatching area$-$green rectangle area).}  
\label{fig4}
\end{figure}

\begin{restatable}{lemma}{lemmagrtr}\label{reportedgrtr}
If $u_i$ reports a price greater than her true price, i.e.,  $p_i \geq \pi_i$, then her utility  will be less than the utility for the true price, i.e., $\rho(p_i)\leq \rho(\pi_i)$.
\end{restatable}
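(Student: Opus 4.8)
The plan is to reduce the inequality $\rho(p_i)\le\rho(\pi_i)$ to a single elementary estimate on the integral of $f$ over the interval $[\pi_i,p_i]$, and then close the gap using only the monotonicity hypothesis on $f$. First I would unfold the definitions: by Definitions~\ref{incentive} and~\ref{utility},
\[
\rho(p_i)=p_i f(p_i)+\int_{p_i}^{\infty}f(z)\,dz-\pi_i f(p_i)=(p_i-\pi_i)f(p_i)+\int_{p_i}^{\infty}f(z)\,dz,
\]
and in particular $\rho(\pi_i)=\int_{\pi_i}^{\infty}f(z)\,dz$, since the term $(\pi_i-\pi_i)f(\pi_i)$ vanishes. (Here the improper integrals are finite because the consumer chooses $f$ so that the incentive is finite, i.e. $f(z)\to 0$ fast enough as $z\to\infty$; this is exactly the standing assumption from Section~\ref{sec:truth}.)

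Next I would form the difference and split the tail integral. Using $p_i\ge\pi_i$ to write $\int_{\pi_i}^{\infty}f=\int_{\pi_i}^{p_i}f+\int_{p_i}^{\infty}f$, we get
\[
\rho(\pi_i)-\rho(p_i)=\int_{\pi_i}^{p_i}f(z)\,dz-(p_i-\pi_i)f(p_i).
\]
So the whole statement is equivalent to the bound $(p_i-\pi_i)f(p_i)\le\int_{\pi_i}^{p_i}f(z)\,dz$, which geometrically says that the rectangle of width $p_i-\pi_i$ and height $f(p_i)$ sits underneath the graph of $f$ on $[\pi_i,p_i]$ — precisely the picture in Fig.~\ref{fig4}.

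Finally I would invoke monotonicity: since $f$ is monotonically decreasing and $z\le p_i$ for every $z\in[\pi_i,p_i]$, we have $f(z)\ge f(p_i)$ pointwise on that interval, hence $\int_{\pi_i}^{p_i}f(z)\,dz\ge\int_{\pi_i}^{p_i}f(p_i)\,dz=(p_i-\pi_i)f(p_i)$. Substituting back gives $\rho(\pi_i)-\rho(p_i)\ge 0$, which is the claim. I do not anticipate a genuine obstacle here; the only points requiring a word of care are the well-definedness (finiteness) of the tail integrals and the observation that $p_i\ge\pi_i$ is exactly what makes $[\pi_i,p_i]$ a nondegenerate interval on which the monotonicity estimate points in the right direction (the companion case $p_i\le\pi_i$, handled separately in the next lemma, reverses this).
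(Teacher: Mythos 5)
Your proposal is correct and follows essentially the same route as the paper: unfold $\rho$, reduce the claim to the inequality $\int_{\pi_i}^{p_i} f(z)\,dz \geq (p_i-\pi_i)f(p_i)$, and conclude from the monotonicity of $f$. Your closing step (the direct pointwise bound $f(z)\geq f(p_i)$ on $[\pi_i,p_i]$) is in fact cleaner than the paper's detour through the auxiliary function $g(p)=f(\pi_i)-f(p)$, whose stated decomposition and monotonicity claim are garbled, and your non-strict formulation also correctly covers the boundary case $p_i=\pi_i$.
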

\proof See Appendix~\ref{app:a}.

\begin{restatable}{lemma}{lemmasmll}\label{reportedless}
If $u_i$ reports a price smaller than her true price, i.e.,  $p_i \leq \pi_i$, then her utility will be less than the utility for the true price, i.e., $\rho(p_i)\leq \rho(\pi_i)$.
\end{restatable}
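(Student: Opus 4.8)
The plan is to reduce the claim to one elementary inequality about the monotone function $f$ and then close it by comparing an integral against a rectangle. First I would substitute Definitions~\ref{incentive} and~\ref{utility} to put the utility in the compact form
\begin{equation*}
\rho(p_i) = \mu(p_i) - \pi_i f(p_i) = (p_i - \pi_i)\, f(p_i) + \int_{p_i}^{\infty} f(z)\, dz .
\end{equation*}
Evaluating this at the true price kills the first term, leaving $\rho(\pi_i) = \int_{\pi_i}^{\infty} f(z)\, dz$.

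Next I would form the difference $\rho(\pi_i) - \rho(p_i)$. Since $p_i \leq \pi_i$, I can split the tail integral appearing in $\rho(p_i)$ as $\int_{p_i}^{\infty} = \int_{p_i}^{\pi_i} + \int_{\pi_i}^{\infty}$; the common tail $\int_{\pi_i}^{\infty} f$ then cancels, and one is left with
\begin{equation*}
\rho(\pi_i) - \rho(p_i) = (\pi_i - p_i)\, f(p_i) - \int_{p_i}^{\pi_i} f(z)\, dz .
\end{equation*}
So it suffices to show that the rectangle of width $\pi_i - p_i$ and height $f(p_i)$ dominates the area under $f$ on the interval $[p_i, \pi_i]$.

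This final step is exactly where the monotonicity hypothesis is used: for every $z \in [p_i, \pi_i]$ we have $z \geq p_i$, hence $f(z) \leq f(p_i)$ because $f$ is monotonically decreasing; integrating this pointwise bound over $[p_i, \pi_i]$ yields $\int_{p_i}^{\pi_i} f(z)\, dz \leq (\pi_i - p_i)\, f(p_i)$, and therefore $\rho(\pi_i) - \rho(p_i) \geq 0$, which is the assertion. (If $p_i = \pi_i$ both sides vanish and there is nothing to prove.) A picture like Fig.~\ref{fig4} makes this rectangle-versus-integral comparison immediate.

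I do not anticipate a genuine obstacle here: the argument is the mirror image of the one for Lemma~\ref{reportedgrtr}, with the roles of $p_i$ and $\pi_i$ interchanged in the comparison interval. The only points requiring care are bookkeeping of signs and the orientation of the interval of integration; note in particular that, unlike some monotonicity arguments, non‑negativity of $f$ is never invoked — only that $f$ is decreasing — and that the finiteness of $\int_{\pi_i}^{\infty} f$, which the consumer's choice of $f$ guarantees, is what makes $\rho(\pi_i)$ well defined.
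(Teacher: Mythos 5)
Your argument is correct and is essentially the paper's own: the paper proves this lemma by appealing to the ``symmetric argument'' to Lemma~\ref{reportedgrtr}, which is exactly the reduction you carry out, namely $\rho(\pi_i)-\rho(p_i)=(\pi_i-p_i)f(p_i)-\int_{p_i}^{\pi_i}f(z)\,dz\geq 0$ by the pointwise bound $f(z)\leq f(p_i)$ on $[p_i,\pi_i]$ from monotonicity. Your write-up just makes that mirror-image bookkeeping explicit, so no gap.
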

\proof See Appendix~\ref{app:a}.\\

Combining Lemma \ref{reportedgrtr} and Lemma \ref{reportedless} gives the announced result. 
We assume of course that each data producer is a rational individual, i.e., capable of identifying the best strategy to maximize her utility.

\begin{restatable}{theorem}{thmmainthm}\label{mainthm}
If every data data producer acts rationally, then the
proposed incentive mechanism guarantees the truthfulness of the system.
\end{restatable}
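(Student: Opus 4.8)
The plan is to derive the theorem as an immediate corollary of the two preceding lemmas. First I would unpack the definition of ``truthfulness of the system'': it means that every data producer $u_i$ reports her true $\epsilon$ unit price, i.e.\ $p_i = \pi_i$. Since the $m$ consumers and $n$ providers proceed independently, it suffices to argue that each individual provider $u_i$ has no incentive to deviate from $p_i = \pi_i$; the system-wide claim then follows by applying the single-provider argument to all $i \in \{1,\dots,n\}$.

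Next I would fix an arbitrary provider $u_i$ and consider her utility function $\rho(p_i) = \mu(p_i) - \pi_i f(p_i)$ over all admissible reports $p_i \in \mathbb{R}^+$. The key step is to observe that any report $p_i$ falls into one of the two regimes covered by the lemmas: either $p_i \geq \pi_i$, in which case Lemma~\ref{reportedgrtr} gives $\rho(p_i) \leq \rho(\pi_i)$, or $p_i \leq \pi_i$, in which case Lemma~\ref{reportedless} gives $\rho(p_i) \leq \rho(\pi_i)$. (The boundary case $p_i = \pi_i$ is consistent with both and yields equality.) Combining the two, we conclude $\rho(\pi_i) = \max_{p_i \in \mathbb{R}^+} \rho(p_i)$, so $p_i = \pi_i$ is a utility-maximizing report.

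Finally I would invoke the rationality assumption: a rational provider, by definition, chooses the report that maximizes her utility, and since $\rho$ attains its maximum at $\pi_i$, she reports $\pi_i$. As $u_i$ was arbitrary, every provider reports truthfully, which is exactly the assertion that the mechanism guarantees truthfulness of the system. I do not expect any real obstacle here, since all the analytic work has already been discharged in Lemmas~\ref{reportedgrtr} and~\ref{reportedless}; the only thing to be slightly careful about is to state explicitly that the two lemmas' hypotheses jointly exhaust the domain $\mathbb{R}^+$ of possible reports (so that no deviation escapes the bound), and to note that $f$ being monotonically decreasing — the standing hypothesis under which both lemmas were proved — is in force throughout.
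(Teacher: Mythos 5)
Your proposal is correct and matches the paper's argument exactly: the theorem is derived as an immediate consequence of Lemmas~\ref{reportedgrtr} and~\ref{reportedless}, since the two cases $p_i \geq \pi_i$ and $p_i \leq \pi_i$ cover all possible reports and rationality then forces $p_i = \pi_i$. Your added care in noting that the lemmas' hypotheses exhaust $\mathbb{R}^+$ and that monotonic decrease of $f$ is the standing assumption is a fuller write-up of the same reasoning, not a different route.
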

\begin{proof} Immediate from Lemma \ref{reportedgrtr} and Lemma \ref{reportedless}.
\end{proof}

\subsection{Optimizing the incentive mechanism}\label{sec:optm}

In this section, we propose an optimization mechanism to identify   an optimal function $f$ 
for the data consumer with respect to the  following two desiderata:
\begin{enumerate}[label=(\roman*)]
    \item \textit{Maximum Information:} maximize the total information gain of the data consumer  with a fixed budget.   
    \item \textit{Maximum Profit:} maximize the total profit  of the data consumer with a fixed budget.    
\end{enumerate}
By ``budget'' here we mean the budget of the data consumer to  pay the data providers.

We start by introducing the notions of total information and profit for the consumer. 
Note that, \emph{by the sequential compositionality of differential privacy},  the total information   is the sum of the information obtained from each data provider.
 
\begin{definition}[Total information]\label{def:ti}
The \emph{total information} $\mathcal{I}(\vb*{u})$ obtained by the data consumer by concluding trades with each of the data providers of the tuple $\vb*{u}=(u_1,\ldots,u_n)$
 is defined as
 \[
 \mathcal{I(\vb*{u})}=\sum_{i=1}^nf(p_i)
 \]
\end{definition}

As for the profit, we can reasonably assume  to be monotonically increasing with the amount of information obtained, and that the total profit  is the sum of the profits obtained with each individual trading. The latter is naturally defined as the difference between the benefit (aka \emph{payoff}) obtained by re-selling or processing the data, and the price payed to the data provider. 

\begin{definition}[Payoff and profit]
\begin{itemize}[label=$\bullet$]
\item The payoff function for the data consumer, denoted by  $\tau(\cdot)$, is  the benefit that the data consumer receives by processing or selling the information gathered from the different data providers. The argument of $\tau(\cdot)$ is $\epsilon$, the amount of the information received. We assume $\tau(\epsilon)$ to be monotonically increasing with $\epsilon$. 
\item The total profit for the data consumer is given by $\sum_{i=1}^n(\tau(\epsilon_i)-\mu(\epsilon_i))$, where $\epsilon_i=f(p_i)$, i.e., the $\epsilon$-value allocated to $u_i$.
\end{itemize}
\end{definition}

We will consider a family of functions $\mathcal{F}$ indexed by a 
parameter $c$,
to which the $\epsilon$-allocating function $f$ belongs. The parameter $c$ reflects the data consumer's will to collect the information and, for technical reasons, we  assume $f$ to be continuous, differentiable and concave  with respect to it. For each data provider, different values of $c$ will give different $f$, that, in turn, will give rise to a different incentive-curve as per equation \eqref{incentive}, which the data consumer should adhere to for compensating for the information obtained from that data provider.

As described in previous sections, the $\epsilon$-allocating functions should be monotonically decreasing with the $\epsilon$ unit price, as the consumer is motivated to buy more information from the consumers that offer it at a lower price. This property also ensures, by Theorem \ref{mainthm},  that the prices reported by the data producers will be their true prices. 
Hence we impose the following constraint on $\mathcal{F}$:
\begin{equation}\label{eq:F}
\begin{array}{llll}
\mathcal{F} &\;\;\subseteq \;\;&\{f(\cdot,\cdot): &c, p \in \mathbb{R}^+ , f(c,p) \text{ is continuous, differentiable} \\ 
&&& \text{and concave on } c,
\text{and decreasing with }p \}.
\end{array}
\end{equation}
Note that we have added the parameter $c$ as an additional argument in $f$, so $f$ has now two arguments.

\begin{example}\label{exa:Flog}
An example of such class $\mathcal{F}$ is that of Fig.~\ref{fig3}:
\[\mathcal{F}= \{\ln(e-cp): c \in \mathbb{R}^+\}\,.\]
\end{example}

\begin{example}\label{exa:Flin}
Another example is:
\[\mathcal{F}= \{1-cp: c \in \mathbb{R}^+\}\,.\] 
\end{example}

After 
the prices $p_1,\ldots, p_n$ have been reported by the data producers $u_1,\ldots,u_n$, the data consumer will try to choose an optimal $c$ maximizing her profit.
Fig.~\ref{fig5} illustrates an example with two data provider's incentive graph and payoff for data consumer.

We will analyse the possibility to choose an optimal $c$, that, in turn, leads to an optimal $f(c,\cdot)$ addressing scenarios (i) and (ii).

\begin{figure}[htbp]
\centerline{\includegraphics[width=0.75\textwidth]{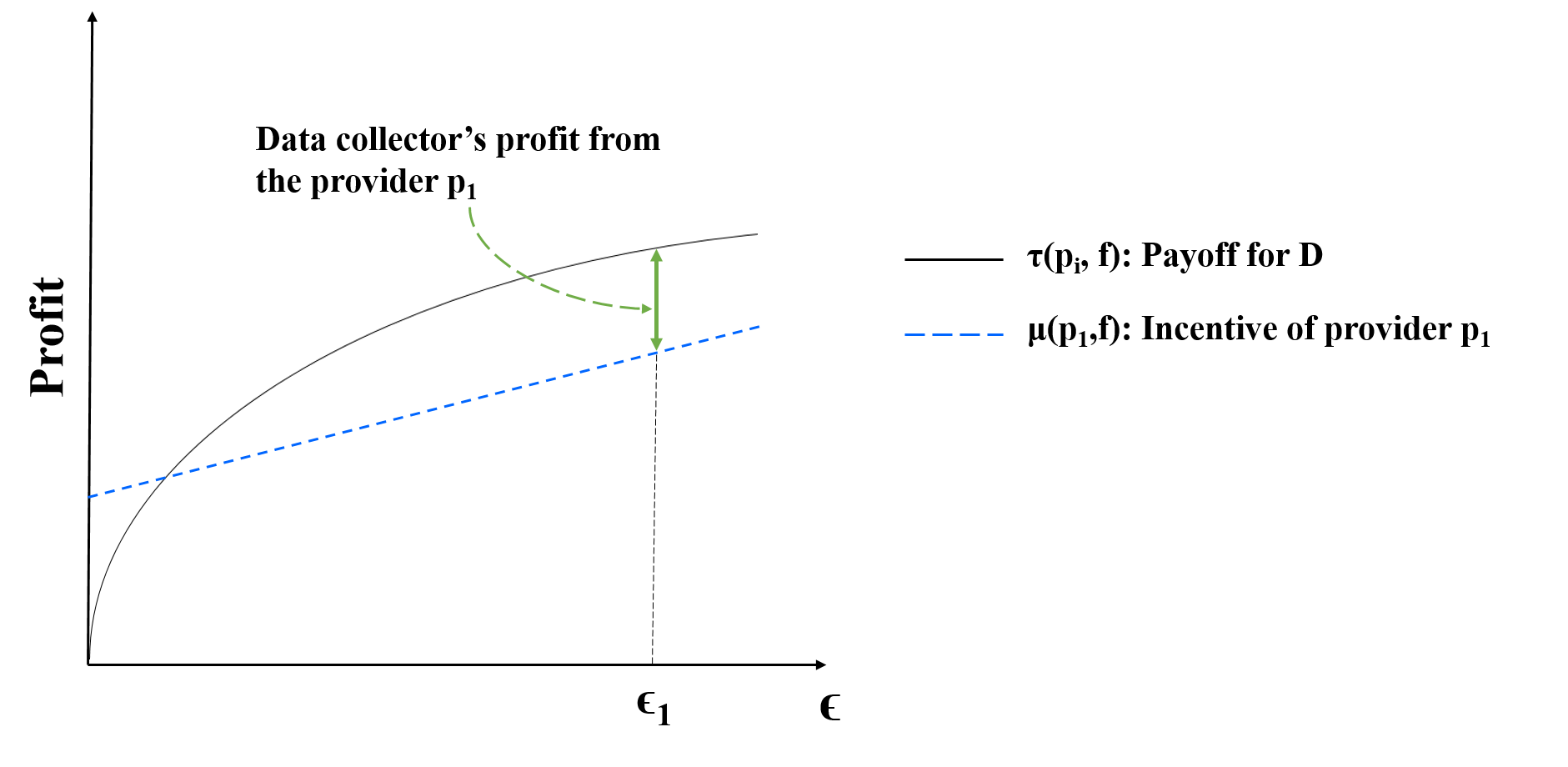}}
\caption{Illustrating the payoff for $c$ and the incentive-plots for the data consumer involving two data providers reporting $p_1, p_2$. The Y-intercept of $\mu_1$ is $\int_{p_1}^{\infty}f(z)dz$ and that for $\mu_2$ is $\int_{p_2}^{\infty}f(z)dz$.}
\label{fig5}
\end{figure}

%In Section~\ref{sec:truth}, we described the truthful price report mechanism which ensures that a data provider has the maximum utility when she reports her privacy price honestly. The incentive paid by the data consumer to every data provider is determined by her reported price and allocated value of $\epsilon$. The value of $\epsilon$ assigned to the $i^{th}$ data provider, $u_i$, is determined by the $f(p_i)$ function. We assume that the function $f$ is inversely proportional to the reported price, as it is reasonable that the data consumer's willingness to obtain the amount of information from a certain data provider $u_i$, which is the value of $\epsilon$ assigned to $u_i$, to reduce with an increase in $p_i$, for any $i \in \{1,\ldots,n\}$. In other words, a data provider $u_i$ with more valuable data than $u_j$, is expected to report a higher price per unit $\epsilon$ to the data consumer than $u_j$. Hence obtaining a high amount of information (assigning a high value for $\epsilon$) from $u_i$ will induce a high incentive to be paid, and the data consumer who do not need that much information will not allocate as much $\epsilon$ for the price reported by $u_i$, costing them a high price to be paid, as compared to what they would need to pay to $u_j$ with a lower reported price.  

%We can reasonably assume $\tau$ to be monotonically increasing and expect more information to engender a greater pay-off and the total pay-off to depend just on the total information obtained, when a certain $\epsilon$-allocating function has been fixed. 

In the context of differential privacy, we may assume that $\tau$ (the data consumer's payoff function) is additive, i.e., 
\begin{equation}
\mbox{\bf Additivity} \quad \tau(a+b)=\tau(a)+\tau(b) \quad \mbox{ for every } a,b \in \mathbb{R}^+\,.
\end{equation}
This is a reasonable assumption that goes well along with the sequential compositionality property of differential privacy, at least for small values of $\epsilon$\footnote{From a technical point of view, the additive property holds also for large values of $\epsilon$. 
 However, from a practical point of view, 
for large values of $\epsilon$, for instance $200$ and $400$,  then the original information is almost entirely revealed in both cases, and would not make sense to pay twice the price of $200$ $\epsilon$ units to achieve $400$ $\epsilon$ units.}.

We start by showing that the two desiderata (i) and (ii) are equivalent: 

\begin{restatable}{theorem}{thmequival}\label{th:two}
If $\tau(\cdot)$ is additive, then maximizing information and maximizing profit (desiderata (i) and (ii)) are equivalent, in the sense that a   $\epsilon$-allocating function $f(\cdot,\cdot)$ that maximizes the one, maximizes also the other. \label{equival}
\end{restatable}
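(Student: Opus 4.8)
The plan is to show that, once the budget is fixed, the consumer's profit is nothing but an order-preserving transformation of the total information, so that the two optimisation problems literally coincide. First I would write the total profit over the feasible allocations as $\sum_{i=1}^{n}\bigl(\tau(\epsilon_i)-\mu(p_i)\bigr)$, where $\epsilon_i=f(c,p_i)$ is the $\epsilon$-value allocated to $u_i$. The term $\sum_{i=1}^{n}\mu(p_i)$ is exactly the total sum the consumer disburses to the providers; since the budget is fixed, this equals the budget $B$, a constant that does not depend on the choice of $f$ within $\mathcal{F}$. (If one prefers to read the budget as the inequality $\sum_i\mu(p_i)\le B$, the same conclusion applies after observing that at any optimum the constraint must be active, using that $\tau$ is monotonically increasing and that, by the shape of $\mu$ in Definition~\ref{incentive} and the monotonicity structure of $\mathcal{F}$, the residual budget can always be spent by adjusting $c$.) In either case we may work on the set of allocations with $\sum_{i=1}^{n}\mu(p_i)=B$, which is common to desiderata (i) and (ii).

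On this common feasible set the profit equals $\sum_{i=1}^{n}\tau(\epsilon_i)-B$. Next I would use the additivity hypothesis on $\tau$: iterating $\tau(a+b)=\tau(a)+\tau(b)$ yields $\sum_{i=1}^{n}\tau(\epsilon_i)=\tau\!\left(\sum_{i=1}^{n}\epsilon_i\right)=\tau\bigl(\mathcal{I}(\vb*{u})\bigr)$ by Definition~\ref{def:ti}. Hence, viewed as a functional of the chosen $\epsilon$-allocating function $f$, the profit equals $\tau\bigl(\mathcal{I}(\vb*{u})\bigr)-B$.

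Finally, since $B$ is constant and $\tau$ is monotonically increasing, the map $\mathcal{I}(\vb*{u})\mapsto\tau\bigl(\mathcal{I}(\vb*{u})\bigr)-B$ is order-preserving. Consequently an $f\in\mathcal{F}$ maximises the profit over the feasible set if and only if it maximises the total information $\mathcal{I}(\vb*{u})$ over that same set, which is precisely the claimed equivalence of (i) and (ii).

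The crux of the argument — and the only place where anything beyond bookkeeping is needed — is the reduction of the payment term $\sum_i\mu(p_i)$ to a constant; once the profit is exhibited as the affine, monotone image $\tau(\mathcal{I}(\vb*{u}))-B$ of the total information, the equivalence follows in one line from additivity and the monotonicity of $\tau$.
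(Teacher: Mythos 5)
Your proposal is correct and follows essentially the same route as the paper's own proof: with the expenditure fixed at $B$, profit reduces to $\sum_i\tau(\epsilon_i)-B=\tau\bigl(\mathcal{I}(\vb*{u})\bigr)-B$ by additivity, and the monotonicity of $\tau$ makes maximizing this equivalent to maximizing $\mathcal{I}(\vb*{u})$. The only difference is your added remark on the inequality-budget reading and constraint activeness, which the paper sidesteps by working directly with the equality constraint.
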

\proof See  Appendix~\ref{app:a}.

\begin{corollary}
If $\tau(\cdot)$ is additive, then   the optimal choice of $f(\cdot,\cdot)$ w.r.t. the selected family of functions will maximize both the information gain and the profit for the data consumer. 
\end{corollary}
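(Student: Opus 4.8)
The plan is to obtain this as an immediate consequence of Theorem~\ref{th:two}. First I would fix the reported prices $p_1,\ldots,p_n$ and view both objectives as functions of the single parameter $c\in\mathbb{R}^+$ with $f(c,\cdot)\in\mathcal{F}$: the total information $\mathcal{I}(\vb*{u})=\sum_{i=1}^n f(c,p_i)$ and the total profit $\sum_{i=1}^n\bigl(\tau(\epsilon_i)-\mu(\epsilon_i)\bigr)$ with $\epsilon_i=f(c,p_i)$. Let $c^\star$ be the value of $c$ the data consumer selects when solving the profit-maximization problem over the family, and set $f^\star=f(c^\star,\cdot)$; this $f^\star$ is what the statement calls ``the optimal choice of $f$ w.r.t.\ the selected family''.

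Next I would invoke Theorem~\ref{th:two}: since $\tau$ is additive, an $\epsilon$-allocating function that maximizes the profit also maximizes the information gain, and vice versa. Applied to $f^\star$, which maximizes the profit by construction, this yields that $f^\star$ also maximizes $\mathcal{I}(\vb*{u})$ over $\mathcal{F}$. Hence the one function $f^\star$ simultaneously attains the optimum of both desiderata, which is exactly the claim. By symmetry, had the consumer instead optimized the information gain first, the same equivalence would show the resulting $f$ also maximizes the profit, so the two optimization routes pick out the same $\epsilon$-allocating function.

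I do not expect a genuine obstacle: the corollary is a direct restatement of the equivalence established in Theorem~\ref{th:two}. The only point meriting a line of care is the well-definedness of ``the optimal choice'', i.e.\ that the supremum over $c$ is actually attained. This is where the structural assumptions on $\mathcal{F}$ in \eqref{eq:F} are used: concavity and differentiability of each $f(\cdot,p_i)$ in $c$, together with monotonicity of $\tau$, make the profit a well-behaved (in the additive case, concave) function of $c$, so a maximizer $c^\star$ exists on a suitable closed range of $c$ and the argument above applies verbatim to $f^\star=f(c^\star,\cdot)$.
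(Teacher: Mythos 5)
Your proposal is correct and follows the same route as the paper, which proves the corollary as an immediate consequence of Theorem~\ref{equival}; your extra remark on the existence of the maximizing $c$ is the content the paper defers to Theorems~\ref{th:three} and~\ref{th:four}.
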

\begin{proof} Immediate  from Theorem \ref{equival}.
\end{proof}

%In the context of differential privacy, we may assume that $\tau$ (data consumer's payoff function) is monotonically increasing and additive, i.e., $\tau(ab)=\tau(a)+\tau(b)$ for every $a,b \in \mathbb{R}^+$. 

We now consider the complexity problem for finding the optimal $f(\cdot,\cdot)$. 
Due to the assumptions made in Equation~\ref{eq:F}, and to the additivity of $\tau$, we can apply the method of the Lagrangians to find such $f(\cdot,\cdot)$  (cf. Appendix~\ref{app:a}).

\begin{restatable}{theorem}{maxproffixedbudget}\label{th:three}
If $\tau$ is additive,  
then there
exists a $c$ that gives an optimal {\bf profit-maximizing} function $f(c,\cdot)\in\mathcal{F}$,  for a fixed  budget, 
and we can derive such $c$ via the  method of the Lagrangians.
\label{maxproffixedbudget}
\end{restatable}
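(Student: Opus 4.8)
The plan is to turn the statement into a one–dimensional constrained optimization in the parameter $c$, eliminate $\tau$ by additivity, and then read $c$ off the Lagrangian first–order conditions. First I would normalise the payoff: an additive, monotonically increasing $\tau\colon\mathbb R^+\to\mathbb R^+$ is linear (a monotone solution of Cauchy's functional equation), so $\tau(\epsilon)=k\epsilon$ with $k=\tau(1)>0$. Writing $\epsilon_i=f(c,p_i)$, $\Gamma(c)=\sum_{i=1}^n\mu(p_i)=\sum_{i=1}^n\!\big(p_if(c,p_i)+\int_{p_i}^{\infty}f(c,z)\,dz\big)$ for the total payment and $\mathcal I(c)=\sum_{i=1}^n f(c,p_i)$ for the total information of Definition~\ref{def:ti}, the total profit is $\Pi(c)=k\,\mathcal I(c)-\Gamma(c)$. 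The theorem then asserts that $\max\{\Pi(c):c\in\mathbb R^+,\ \Gamma(c)\le B\}$ is attained and that a maximiser solves a Lagrangian system; by Theorem~\ref{equival} this is the same problem as $\max\{\mathcal I(c):c\in\mathbb R^+,\ \Gamma(c)\le B\}$, and I would use whichever form is handier.

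Next I would exploit the structure imposed on $\mathcal F$. Since each $f(c,p_i)$ is concave in $c$, so are $\mathcal I$ and --- integrating the concavity inequality in the variable $z$ --- each $\mu(p_i)$, hence $\Gamma$; and since the allocation functions of $\mathcal F$ are monotone in $c$ (decreasing in $c$ in both Examples~\ref{exa:Flog}--\ref{exa:Flin}), $\mathcal I$ and $\Gamma$ are monotone in $c$ as well. Hence the feasible set $C_B=\{c\ge 0:\Gamma(c)\le B\}$ is a closed interval --- a ray, or a bounded interval when the budget binds --- which is nonempty and convex. As $\mathcal I$ is continuous (differentiability of $f$ in $c$, plus dominated convergence for the integral term of $\mu$) and bounded above on $C_B$, it attains its maximum over $C_B$ at some $c^\star$: at the endpoint where $\Gamma(c^\star)=B$ if the budget binds, and otherwise at the unconstrained maximiser of the concave $\mathcal I$ on $\mathbb R^+$. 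This gives the existence part.

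For the derivation I would invoke the method of the Lagrangians. With $\mathcal L(c,\lambda)=\mathcal I(c)-\lambda\,(\Gamma(c)-B)$ and $\lambda\ge 0$, the KKT conditions at $c^\star$ are $\partial_c\mathcal L(c^\star,\lambda)=0$, i.e.
\[
\sum_{i=1}^n\partial_c f(c^\star,p_i)\;=\;\lambda\sum_{i=1}^n\Big(p_i\,\partial_c f(c^\star,p_i)+\int_{p_i}^{\infty}\partial_c f(c^\star,z)\,dz\Big),
\]
together with complementary slackness $\lambda\,(\Gamma(c^\star)-B)=0$ and the boundary alternative $c^\star=0$. Because $\mathcal I$ is concave and $C_B$ convex, these conditions are also sufficient, so solving the system --- typically with active budget, $\lambda>0$ and $\Gamma(c^\star)=B$, the displayed equation then fixing $\lambda$ --- produces the profit-maximising $c$, and with it the optimal $f(c,\cdot)\in\mathcal F$. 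One should also check the constraint qualification $\Gamma'(c^\star)\neq 0$ when the budget binds, which holds because $\Gamma$ is strictly monotone in $c$ on the relevant part of $\mathcal F$.

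The \textbf{step I expect to be the real obstacle} is not this computation but justifying that the feasible region is benign. If one drops the monotonicity of $\Gamma$ in $c$, then $\{\Gamma\le B\}$ is a sublevel set of the concave $\Gamma$, which in one variable is a union of up to two intervals and need not be convex, while $\Pi=k\mathcal I-\Gamma$ becomes a difference of concave functions; KKT stationarity would remain necessary but would only enumerate candidates, among which the global maximiser must then be picked by comparing values of $\Pi$. I would handle this either by arguing, as above, that the natural families make $\Gamma$ monotone so $C_B$ is a single interval, or by falling back on pure compactness --- restrict to a large interval $[0,C]$ outside of which $\Pi$ stays within $\varepsilon$ of its limit at $\infty$, apply the extreme value theorem there, and treat the Lagrangian merely as a candidate generator.
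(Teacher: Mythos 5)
Your proposal is correct in substance and its skeleton matches the paper's: additivity (plus monotonicity) of $\tau$ reduces profit maximization under the budget to maximizing $\sum_{i} f(c,p_i)$, existence of an optimal $c$ comes from continuity and an extreme-value argument, and differentiability plus concavity in $c$ make the Lagrangian stationarity conditions both applicable and sufficient for a global maximum. The differences are in the framing. The paper keeps the budget as an equality, $\sum_i\mu(p_i)=B$, so the payment term is constant and the reduction to information maximization is immediate; you relax it to $\Gamma(c)\le B$ and run KKT with complementary slackness, which is fine when the budget binds, but note that with a slack budget the appeal to Theorem~\ref{equival} is no longer automatic (a profit maximizer need not exhaust the budget, while an information maximizer will), so strictly speaking you are proving the equality-constrained statement the paper intends and should present it that way. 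Your detour through Cauchy's functional equation to get $\tau(\epsilon)=k\epsilon$ is harmless but unnecessary: the paper only uses $\sum_i\tau(\epsilon_i)=\tau\bigl(\sum_i\epsilon_i\bigr)$ together with monotonicity of $\tau$. Finally, for compactness of the feasible set you lean on monotonicity of $f$ (hence of $\mathcal{I}$ and $\Gamma$) in $c$, which is not part of the definition of $\mathcal{F}$ in (\ref{eq:F}) --- you rightly flag this and give a fallback --- whereas the paper instead restricts $c$ to the closed interval $[0,c_{\text{min}}]$, with $c_{\text{min}}$ the smallest $c$ at which some $f(c,p_i)$ vanishes; both devices do the same (somewhat informal) job of truncating the $c$-range, and your explicit discussion of this point is, if anything, more careful than the paper's.
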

\begin{proof} See  Appendix~\ref{app:a}.
\end{proof}
\begin{restatable}{theorem}{maxinffixbudget}\label{th:four}
There exists a $c$ that gives an optimal {\bf information-maximizing} function $f(c,\cdot)\in\mathcal{F}$,  for a fixed  budget, 
and we can derive such $c$ via the  method of the Lagrangians.
\label{maxinffixbudget}
\end{restatable}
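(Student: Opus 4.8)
The plan is to treat the information‑maximization problem, after the prices $p_1,\dots,p_n$ have been reported, as a one‑dimensional constrained optimization in the parameter $c$, exactly as in the proof of Theorem~\ref{th:three} but with a simpler objective. Concretely, the data consumer seeks
\[
\max_{c\in\mathbb{R}^+}\ \mathcal{I}(c):=\sum_{i=1}^n f(c,p_i)\qquad\text{subject to}\qquad \sum_{i=1}^n\mu(c,p_i)\le B,
\]
where $\mu(c,p_i)=p_i f(c,p_i)+\int_{p_i}^{\infty}f(c,z)\,dz$ is the offer of Definition~\ref{incentive} and $B$ is the fixed budget. Observe that, unlike in Theorems~\ref{th:two} and~\ref{th:three}, the objective here does not involve the payoff $\tau$ at all, so no additivity hypothesis is needed; if one does assume $\tau$ additive, the statement is already immediate from Theorem~\ref{th:three} via the equivalence in Theorem~\ref{th:two}, but the argument sketched below works unconditionally.

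First I would establish that a maximizer exists. The feasible set $C=\{c\in\mathbb{R}^+:\sum_i\mu(c,p_i)\le B\}$ is nonempty (it contains all sufficiently small admissible $c$) and closed, since $c\mapsto\sum_i\mu(c,p_i)$ is continuous by the continuity and differentiability assumptions on $f$ in~\eqref{eq:F}, the continuity of the improper‑integral term following from dominated convergence. Since each $f(\cdot,p_i)$ is concave in $c$ and decreasing in $p$, $\mathcal{I}$ is concave and bounded above on $C$ (each $f(c,p_i)\le f(c,0)$), so by the Weierstrass extreme‑value theorem the supremum of $\mathcal{I}$ over $C$ is attained at some $c^{\ast}\in C$.

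Next I would characterise $c^{\ast}$ through the method of the Lagrangians. Because the consumer always prefers more information and $f$ is decreasing in $p$, the budget constraint is active at the optimum, so we may replace $\le B$ by $=B$ and form
\[
\mathcal{L}(c,\lambda)=\sum_{i=1}^n f(c,p_i)-\lambda\Bigl(\sum_{i=1}^n\mu(c,p_i)-B\Bigr).
\]
Differentiating under the integral sign gives $\partial_c\mu(c,p_i)=p_i\,\partial_c f(c,p_i)+\int_{p_i}^{\infty}\partial_c f(c,z)\,dz$, so the stationarity condition $\partial_c\mathcal{L}=0$ together with $\sum_i\mu(c,p_i)=B$ is a system determining $c^{\ast}$ and the multiplier $\lambda$; for the families of Examples~\ref{exa:Flog} and~\ref{exa:Flin} this collapses to an explicit equation in $c$. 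Finally I would argue that this stationary point is the global maximizer: since $\mathcal{I}$ is concave in $c$, it suffices to check that the feasible set $C$ is an interval, i.e.\ that $c\mapsto\sum_i\mu(c,p_i)$ is quasi‑convex, which holds for the examples because $\mu(c,p_i)$ is itself convex in $c$ there.

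The main obstacle, therefore, is precisely this last point: showing that the Lagrange/KKT stationarity conditions are not merely necessary but sufficient. Since~\eqref{eq:F} only grants concavity of $f$ in $c$ and monotonicity in $p$, one has to do a little work to control the shape of the constraint map $c\mapsto\sum_i\mu(c,p_i)$ — either verifying directly that its sublevel sets are intervals (true in the examples, where $\mu(c,p_i)$ is convex in $c$), or invoking a constraint qualification such as Slater's condition and relying on the concavity of $\mathcal{I}$ to conclude that a KKT point is globally optimal. The remaining ingredients — existence via Weierstrass, differentiation under the integral, and writing down the stationarity equation — are routine once the assumptions already in force are granted.
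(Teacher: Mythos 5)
Your route is essentially the paper's: Theorem~\ref{th:four} is proved there by reducing to the same one-dimensional problem in $c$ as Theorem~\ref{th:three} --- continuity of $c\mapsto\sum_{i=1}^n f(c,p_i)$, existence of a maximizer by the extreme value theorem, then Lagrange stationarity under the budget constraint $\sum_i\mu(p_i)=B$, with concavity in $c$ invoked to promote stationary points to global maxima. Your observation that no additivity of $\tau$ is needed is correct and consistent with the paper (the statement of Theorem~\ref{th:four} omits that hypothesis, and the paper's proof simply drops the $\tau$-step of Theorem~\ref{th:three}). Your closing worry about sufficiency of the KKT conditions is a fair point that the paper glosses over --- it asserts that concavity of $f$ in $c$ suffices without discussing the shape of the constraint set --- and your proposed fixes (convexity or quasi-convexity of $c\mapsto\sum_i\mu(c,p_i)$, checked in the examples, or Slater plus concavity) are more careful than what the paper offers.

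Two steps in your write-up need repair, though. First, the existence argument: you invoke the Weierstrass theorem on a feasible set that you have only argued is closed; a concave function bounded above on a closed \emph{unbounded} subset of $\mathbb{R}^+$ need not attain its supremum. The paper obtains compactness instead by confining $c$ to the interval $[0,c_{\min}]$ with $c_{\min}=\min_{p\in\{p_1,\ldots,p_n\}}\{c: f(c,p)=0\}$, and you should either do the same or otherwise show the feasible set is bounded. Second, the parenthetical claim that the feasible set ``contains all sufficiently small admissible $c$'' has the orientation reversed for the families the paper actually uses: the offer $\mu$ \emph{decreases} with $c$ (as $c\to 0$ in Example~\ref{exa:Flog} the incentive term $\int_{p_i}^{(e-1)/c}\ln(e-cz)\,dz$ diverges), so the budget constraint is satisfied for all sufficiently \emph{large} $c$, not small; relatedly, your justification that the budget constraint is active at the optimum should appeal to monotonicity of $f$ and $\mu$ in $c$, not to $f$ being decreasing in $p$. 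Both are local patches rather than flaws in the overall strategy, which matches the paper's.
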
 
\proof See Appendix~\ref{app:a}.\\

To demonstrate how the method works, we  show how to compute  the specific values of $c$ on the two classes $\mathcal{F}$ of Examples~\ref{exa:Flog} and \ref{exa:Flin}. Such $c$  gives the optimal $\epsilon$-allocating function $f(c, \cdot)$, maximizing $\mathcal{I}(\vb*{u})$ for a given budget.
The derivations are described in detail in Appendix~\ref{app:b}. In each example, $p_i$ is the reported {$\epsilon$ unit price}
%\footnote{as described in section 4.1} 
of $u_i$.

\begin{example} \label{flog}
Let $\mathcal{F}=\{\ln(e-cp): c \in \mathbb{R}^+\}$. The optimal parameter $c$ is the solution of the equation $\ln(\prod_{i=1}^ne^{p_i}(e-cp_i)^{\frac{e}{c}})= B+\frac{n(e-1)}{c}$.
\end{example}

\begin{example} \label{flinear}
Let $\mathcal{F}=\{1-cp: c \in \mathbb{R}^+\}$. The optimal parameter $c$ is the solution of the equation $c^2\sum_{i=1}^np_i^2+2Bc-n=0$.
\end{example}

\subsection{Discussion}

In our model, for the scenario we have considered so far, the parameter $c$ is determined by the number of providers and the budget. We observe that, in both Examples \ref{flog} and \ref{flinear}, if $n$ increases than $c$ increases, and vice versa. This seems natural, because in  the families of both these example
$c$ the incentive that the consumer is going to propose decreases monotonically with $c$. This means that the larger is the offer, the smaller is the incentive that the consumer needs to be paying. 
In other words, the examples confirm the well known market law according to which the price decreases when the offer increases, and vice versa. 

We note that we have been assuming that there is enough offer to satisfy the consumer's demand. If this hypothesis is not satisfied, i.e., if the offer is smaller than the demand, then the 
situation is quite different: now the data producer can choose to whom to sell hid data. In particular, the data consumer who sets a lower $c$ will have a better chance to buy data because, naturally, the  provider  prefers to sell her data to the data consumers who give a higher incentive. 
In the next section we explore in more detail the process, from the perspective of the data provider, in the case in which the demand is  higher than the offer.  

\subsection{Optimized privacy budget splitting mechanism for data providers}

After optimizing an incentive mechanism for a given data consumer dealing with multiple data providers, we focus on the flip side of the setup. We assume a scenario in which a given data provider has to provide her data to multiple data consumers, and that there is enough demand so that she can sell all her data. 

%An equivalent articulation could be that the data provider has an option to provide her data to a consumer periodically up to a fixed time, where a query is made at the end of each period on the dataset which changes with time, but is correlated between different periods. 

%A realistic example of such a data could be the data provider's location. 
%Hence, it is essential that the data provider obfuscates her data for each consumer's query (or query for each period within the fixed time) using a differential privacy mechanism. By the property of sequential composition of differential privacy, the privacy parameters of the mechanism used to privatize each query will add up. 
%It is realistic to assume that each data provider will have a fixed privacy budget, which might be exceeded in this way. Hence, in this section we propose an optimal technique addressing such a scenario. 

Let there be $m$ data consumers, $D_1,\ldots,D_m$ seeking to obtain data from the user $u$. By truthful price report mechanism, as discussed in Section \ref{sec:truth}, $u$ reports her true price to each $D_i$. As discussed in Section \ref{sec:optm}, $D_i$ computes her optimal $\epsilon$-allocating function $f_i$ and requests data from $u$, differentially privatized with $\epsilon=f_i(\pi)$. After receiving $f_1,\ldots,f_m$, $u$ would like to provide her data in such a way that maximizes her utility received after sharing her data. 
\begin{definition}
We say that the data provider has made a \emph{deal} with the data consumer $D_i$ if, upon reporting the true per-unit price of her information, $\pi$, she agrees to share her data  privatized with privacy parameter $\epsilon=f_i(\pi)$.
\end{definition}

It is important to note here that $u$ is not obliged to deal with any data consumer $D_i$, even after receiving $f_i$. Realistically, $u$ has a privacy budget of $\epsilon_{\text{total}}$, which she would not exceed at any price. Let $S=\{i_1,\ldots,i_k\}$ be an arbitrary subset $\{1,\ldots,m\}$. By the sequential composition property of differential privacy, the final privacy parameter achieved by $u$ by sharing her data to an arbitrary set of data consumers $D_{i_1},\ldots,D_{i_k}$ is $\epsilon_S=\sum_{j\in S}f_{j}(\pi)$. $u$'s main intention is to share her data in such a way that ensures $\epsilon_S \leq \epsilon_{\text{total}}$ for all subset $S$ of $\{1,\ldots,n\}$, while maximizing $\sum_{j\in S}\rho_i(\pi,f_j)$, i.e., the total utility received. Reducing it down to the 0/1 knapsack problem, we propose that $u$ should be dealing with $\{D_{i_1},\ldots,D_{i_k}\}$ where $S^*=\{i_1,\ldots,i_k\}\subseteq\{1,\ldots,m\}$, chosen as 
\begin{equation}
    \resizebox{0.9\hsize}{!}{$S^*=\argmax\limits_S\{\sum\limits_{j\in S} \rho(\pi,f_j) \vert S \subseteq \{1,\ldots,m\}, \sum_{j\in S}f_j(\pi)\leq \epsilon_{\text{total}}\}$}
\end{equation}

We show the pseudocode for the $\epsilon$ allocation algorithm and the entire process in Algorithms 1 and 2.

\begin{algorithm}

\SetAlgoLined
\textbf{Input:} $\{\epsilon_1,\ldots,\epsilon_n\}$ stored in array w, $\{p_1,\ldots,p_n\}$ stored in array v, $\epsilon_{\text{total}}$   \;
\textbf{Output:} List of data consumer $\{D_1 \ldots D_k\}$ that is selected to sell data\;
 \textbf{initiate} Two-dimension array m\;
 \While{ $i \leq n $}{
 \While{ $j \leq \epsilon_{\text{total}}$}
 {
  \eIf{$w[i] > \epsilon_{\text{total}}$}{
     m[i, j] := m[i-1, j]
  }{
    m[i, j] := max(m[i-1, j], m[i-1, j-w[i]] + v[i])
  }
 }
}
\textbf{backtrack} using the final solution m and find the index of the data consumer \;
\textbf{return} List of selected data consumer \;
 \caption{Optimized privacy budget splitting algorithm}
\end{algorithm}

\begin{algorithm}
\SetAlFnt{\small}
\textbf{Input:} the data provider $\{u_1, \ldots, u_n\}$,the data consumer $\{D_1, \ldots, D_m\}$ \;
\textbf{Output:} List of the data provider and consumer pair that trade is completed  \;
\While{$i \leq m$}{
 $D_i$ calculate the parameter c to optimize the $f_i(\cdot$)\;
$D_i$ inform the $f_i(\cdot$) to the data provider
}
\While{$j \leq n$}{
$u_j$ report price $p_j$ to the data consumer
}
\While{$i \leq m$}{
\While{$j \leq n$}{
$D_i$ calculate the $\epsilon_j$ based on $p_j$\;
$D_i$ inform the $\epsilon_j$ to the $u_j$
}
}
\While{$j \leq n$}{
$u_j$ perform the \textbf{Optimized $\epsilon$ allocation algorithm} to maximize the utility
}

 \caption{ The proposed data trading process}
\end{algorithm}

\section{Experimental results}

In this section we   perform some experiments to  verify that the proposed  optimization method can find the best profit for the data consumer. 
For the experiments, we consider the families $\mathcal{F}$ of Examples~\ref{flog} and \ref{flinear}, namely  $\mathcal{F}=\{\ln(e-cp): c \in \mathbb{R}^+\}$ and $\mathcal{F}=\{1-cp: c \in \mathbb{R}^+\}$. For these two families 
the optimal parameter $c$ is also derived formally in  Appendix~\ref{app:b}.

The experimental variables are set as follows: we assume that there are $10$ data consumers, and the total number of data providers $n$ is set from $1000$ to $2000$ at an interval of $500$. The data provider's $\epsilon$ unit price is distributed normally with mean $1$ and standard deviation $1$, i.e., $\mathcal{N}(1,1)$, and convert $\epsilon$ unit price less than $0$ or more than $2$ to $0$ and $2$ respectively. We set the unit value $\epsilon$ to $0.1$, and the maximum $\epsilon$ value of data provider to $3$. 
We set the budgets as $60$, $90$, and $120$ and the number of the data provider as $1,000$, $1,500$, $2,000$. We assumed that the data consumer earned a profit of $10$ per $0.1$ epsilon and set the parameter $c$ to $1$ and $10$ for comparison. 

\begin{figure}[htbp]
\centerline{\includegraphics[width=0.82\textwidth]{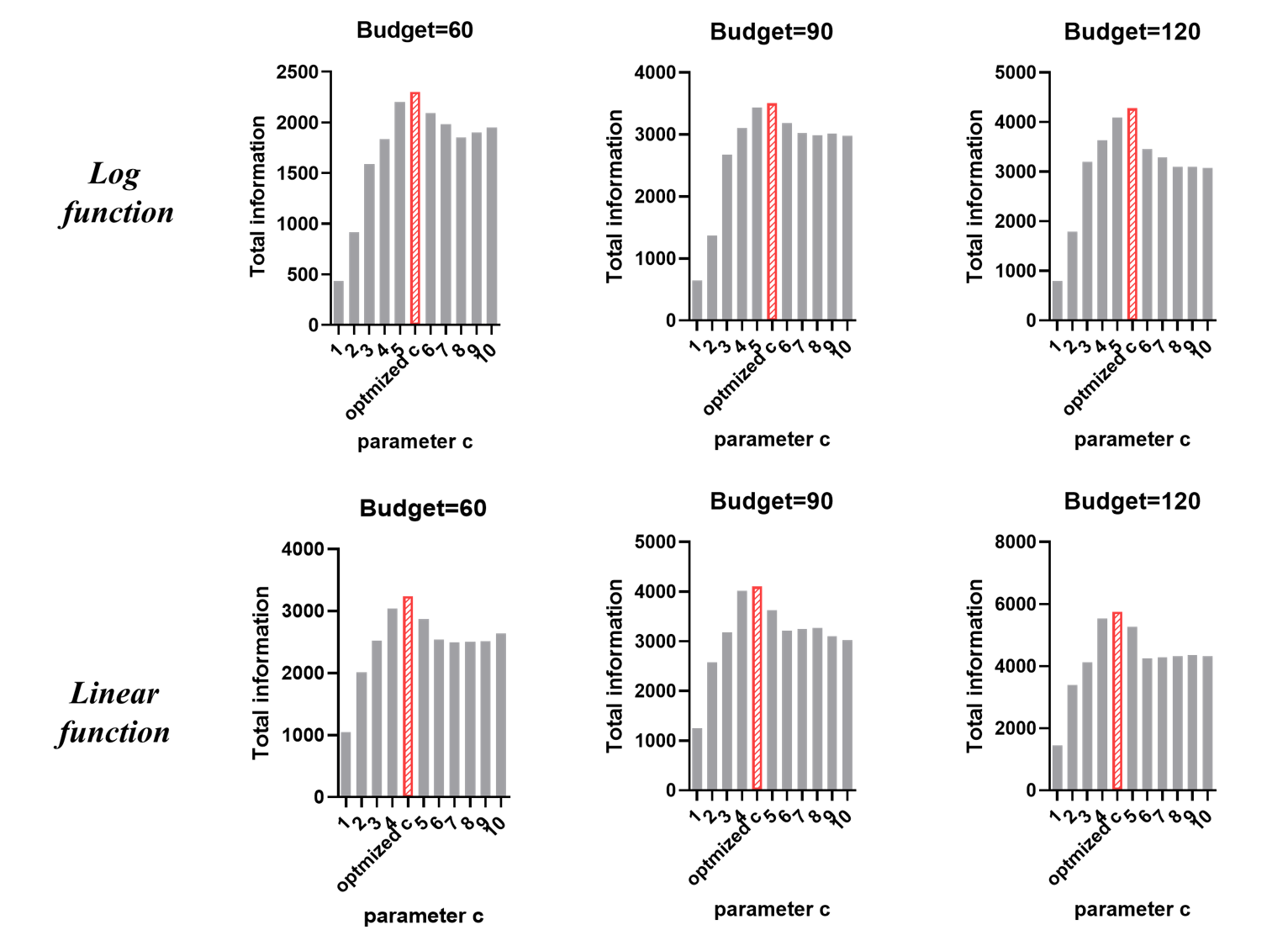}}
\caption{Experimental result of profit under a fixed budget. Log function is  the family $ln(e-cp)$ and Linear function is the family  $1-cp$. We let the parameter $c$ range from $0$ to $1$. The red bin represents the optimal 
value of $c$, namely the $c$ that gives maximum information.}
\label{fig7}
\end{figure}

The results are shown in Fig.~\ref{fig7}. For instance, in the case of the log family $\ln(e-cp)$, the optimal parameter $c$ is $5.36$, and in the case of the linear family $1-cp$, the optimal parameter $c$ is $4.9$. 
It is easy to verify that the optimal values of $c$ correspond to those 
determined by   solving Equations (8) and (13)  in Appendix~\ref{app:b}  of Examples~\ref{flog} and \ref{flinear}, respectively.

%xperimental result of profit under a fixed budget. Log function is the $ln(e-cp)$ and linear function is the $1-cp$. We changed the parameter $c$ from 0 to 1 and compared it with the optimized $c$ value determined by the solving equations (7) and (12) in corollaries 2 and 3 (in Appendix~\ref{app:b}) respectively .

\section{Conclusion and future work}
As machine learning and data mining are getting  more and more deployed, individuals are becoming increasingly aware of the privacy issues and of the value of their data. This evolution of people's attitude towards privacy induces companies to develop new approaches to obtain personal data. We envision a scenario where data consumers can trade private data directly from the data provider by paying the price for the respective data, which has the potential to obtain personal information that could not be obtained in the traditional manner. In order to ensure a steady offer in  the data market, it is imperative to provide the privacy protection that the data providers deem necessary. Differential privacy can be applied to meet this requirement. However, the lack of standards for setting an appropriate value for the differential privacy parameter $\epsilon$, that determines the levels of data utility and privacy protection, makes it difficult to apply this framework in the data market. 

In order to address this problem, we have developed a method, based on incentives and optimization, to find an appropriate value for $\epsilon$ in the process of data trading. The proposed incentive mechanism motivates every data provider to report her privacy price honestly in order to maximize her benefit, and the proposed optimization method maximizes the profit for the data consumer under a fixed financial budget. Additionally, in an environment involving multiple data consumers, our mechanism suggests an optimal way for the data providers to split the privacy budgets, maximizing their utility. Through experiments, we have verified that the proposed method provides the best profits to the provider and consumer. 

Along the lines of what we have studied in this paper, there are many interesting research issues still open in this area. In  future work, we plan to study the following issues:
\begin{enumerate}
\item{ Mechanism for a fair incentive share in an environment where the data providers make a federation for privacy protection}
\item {Maximization of the data consumers' profits by estimating privacy price distribution of the data providers in an environment where demand of the data providers may change dynamically.}
\end{enumerate}

\section*{Acknowledgement}
This work was supported by the European Research Council (ERC) project HYPATIA under the European Union’s Horizon 2020 research and innovation programme. Grant agreement n. 835294.

%\newpage

%\newpage
\begin{subappendices}
\renewcommand{\thesection}{\Alph{section}}%
% or try \arabic{section}

\section{Proofs}\label{app:a} 

\lemmagrtr*
\begin{proof}
Suppose the $i^{th}$ provider reports the privacy price as $p_i > \pi_i$. We want to show:
\begin{flalign}
\rho(\pi_i)-\rho(p_i)>0&&&\\\nonumber
&\\\nonumber\Leftrightarrow \mu(\pi_i)-\pi_if(\pi_i)-\mu(p_i)+\pi_if(p_i) >0&&\\\nonumber\nonumber
&\\\nonumber\Leftrightarrow \mu(\pi_i)-\mu(p_i)-\pi_i(f(\pi_i)-f(p_i))>0&&\\\nonumber
&\\\nonumber\Leftrightarrow \pi_if(\pi_i)+\int_{\pi_i}^{\infty}f(z)dz-(p_if(p_i)+\int_{p_i}^{\infty}f(z)dz)&&\\\nonumber
&\\\nonumber-\pi_i(f(\pi_i)-f(p_i))>0&&\\\nonumber
&\\\nonumber\Leftrightarrow \pi_if(p_i)-p_if(p_i)+\int_{\pi_i}^{p_i}f(z)dz>0&&\\\nonumber
&\\\nonumber\Leftrightarrow \int_{\pi_i}^{p_i}f(z)dz>f(p_i)(p_i-\pi_i)&&\\
\end{flalign}
Note that $f$ is a monotonically decreasing function. Let $g(p)=f(\pi_i)-f(p)$ for any $p\in\mathbb{R}$. 
Furthermore, $\int_{\pi_i}^{p_i}f(z)dz$ can be written as $f(p_i)(p_i-\pi_i)+\int_{\pi_i}^{p_i}g(z)dz$. Observe that $f$ is monotonically decreasing means $g(.)$ is monotonically decreasing, hence, $g(\pi_i)>g(p)$ for any $p>\pi_i$, $g(p_i)>0$. Hence, $\int_{\pi_i}^{p_i}g(z)dz>0 \Rightarrow \int_{\pi_i}^{p_i}f(z)dz>f(p_i)(p_i-\pi_i) \Rightarrow (2)$.
\qed
\end{proof}

\lemmasmll*
\begin{proof}
Similar and symmetric argument to Lemma \ref{reportedgrtr}.
\qed
\end{proof}

\thmequival*
\begin{proof}
Let $\tau$ be a monotonically increasing and additive function representing the pay-off earned by the data consumer by processing the information obtained from the different data providers. We wish to maximize her profit, i.e., $\sum_{i=1}^n(\tau(\epsilon_i)-\mu(p_i))$ for a fixed budget of expenses $B$, i.e., $\sum_{i=1}^n \mu(p_i)=B$. Therefore, we boil down to maximizing $\sum_{i=1}^n\tau(\epsilon_i)=\tau(\sum_{i=1}^n\epsilon_i)$ for  $\sum_{i=1}^n \mu(p_i)=B$, where $\epsilon_i=f(c,p_i)$. As $\tau$ is increasing, it attains maximum if and only if $\sum_{i=1}^n\epsilon_i=\sum_{i=1}^nf(c,p_i)$ is maximum. 
Then just observe that $\sum_{i=1}^nf(c,p_i)= I(\vb*{u})$ (cf. Definition~\ref{def:ti}). 
\end{proof}

\maxproffixedbudget*
\begin{proof}
The profit of the data consumer is $\sum_{i=1}^n(\tau(\epsilon_i)-\mu(p_i))$, where $\epsilon_i=f(c,p_i)$. Therefore, to maximize her profit for a fixed budget of expenses $B$ (used to pay the incentives to the data providers), i.e., $\sum_{i=1}^n\mu(p_i)=B$, we just have to maximize $\sum_{i=1}^n\tau(\epsilon_i)=\sum_{i=1}^n\tau(f(p_i))$, which by the assumption of additivity of $\tau$, is equal to $\tau(\sum_{i=1}^nf(c,p_i))$.

Note that the definition of $\mathcal{F}$ (cf. (\ref{eq:F})) ensures that 
all its function are continuous on  $c$.  Since the sum  of continuous functions is continuous, we have that also $\sum_{i=1}^nf(c,p_i)$ is continuous on $c$. Moreover, $c$ ranges in a closed interval: $c \in [0,c_{\text{min}}]$, where $c_{\text{min}}=\min_{p\in\{p_1,\ldots,p_n\}}\{c:f(c,p)=0\}$. Thus, by \textit{Extreme Value Theorem}, there exists a $c$ which maximizes $\sum_{i=1}^nf(c,p_i)$, which, in turn, maximizes $\sum_{i=1}^n(\tau(\epsilon_i)-\mu(p_i))$.

Furthermore, the condition of differentiability makes possible to apply the method of the lagrangians to find the maximum, 
by imposing that the partial derivatives are $0$. The  concavity condition implies that those points correspond to a (global) maximum.                              
\end{proof}

\maxinffixbudget*
\begin{proof}
By the sequential compositionality of differential privacy, the total information obtained by the data consumer is  $\sum_{i=1}^n\epsilon_i=\sum_{i=1}^nf(c,p_i)$. 
The rest follows as the proof of  Theorem~\ref{th:three}. \qed
\end{proof}

\section{Examples}\label{app:b} 
{\bf Example \ref{flog}}\;\;\;
Let $\mathcal{F}=\{\ln(e-cp): c \in \mathbb{R}^+\}$. 
We want to maximize 
\[F(c)=\sum_{i=1}^nf(c,p_i)\]
subject to the constraint \[G(c)=\sum_{i=1}^n\mu(p_i)=\sum_{i=1}^n[p_{i}\ln{(e-cp_i)}+\int_{p_{i}}^{\frac{e-1}{c}}\ln{(e-zp_i)}dz]-B=0\, ,
\]
for a fixed budget constant $B\in \mathbb{R^+}$. 
By the sequential compositionality of differential privacy, we have  $\sum_{i=1}^nf(c,p_i)=\sum_{i=1}^n\ln{(e-cp_i)}$. Furthermore,  $G(c)=\sum_{i=1}^n[p_{i}\ln{(e-cp_i)}+p_i\ln(\frac{e}{e-cp_i})+\frac{e}{c}\ln(e-cp_i)]-K=0$ where $K=B+\frac{n(e-1)}{c}$.  

Using the method of Lagrange multipliers, we define the Lagrangian function $\mathcal{L}(c,\lambda)=F(c)-\lambda(G(c)-K)=\sum_{i=1}^n\ln{(e-cp_i)}-\lambda(\sum_{i=1}^n[p_{i}\ln{(e-cp_i)}+p_i\ln(\frac{e}{e-cp_i})+\frac{e}{c}\ln(e-cp_i)]-K)=\sum_{i=1}^n\ln{(e-cp_i)}-\lambda(\sum_{i=1}^n[p_{i}+\frac{e}{c}\ln(e-cp_i)]-K)$. Thus we have
\begin{equation}
    \mathcal{L}(c,\lambda)=\sum_{i=1}^n\ln{(e-cp_i)}-\lambda\sum_{i=1}^np_{i}-\frac{\lambda e}{c}\sum_{i=1}^n\ln(e-cp_i)+\lambda K
\end{equation}

Hence, to find the optimal solution of $c$, we wish to solve the following: 
\begin{equation}
    \frac{\partial\mathcal{L}(c,\lambda)}{\partial\lambda}=0
    \label{partlamb}
\end{equation}

\begin{equation}
    \frac{\partial\mathcal{L}(c,\lambda)}{\partial c}=0
    \label{partc}
\end{equation}

Solving equation \eqref{partlamb}, we get
\begin{flalign}
\frac{\partial\mathcal{L}(c,\lambda)}{\partial \lambda}=0\nonumber&&&\\\nonumber
&\\\nonumber\Leftrightarrow K-\sum_{i=1}^n[p_{i}+\frac{e}{c}\ln(e-cp_i)]=0&&\\\nonumber\nonumber
&\\\nonumber\Leftrightarrow \sum_{i=1}^n[p_{i}+\frac{e}{c}\ln(e-cp_i)]=K&&\\\nonumber
&\\\nonumber\Leftrightarrow 
\ln(\prod_{i=1}^ne^{p_i}(e-cp_i)^{\frac{e}{c}})=K \label{implim}&&\\
\end{flalign}

Now,  observe  that $\lim_{c \to 0}\ln(\prod_{i=1}^ne^{p_i}(e-cp_i)^{\frac{e}{c}})=\infty$. Furthermore, observe that  $\lim_{c \to \infty}\ln(\prod_{i=1}^ne^{p_i}(e-cp_i)^{\frac{e}{c}})=0$. We recall that $K=B+\frac{n(e-1)}{c}$, hence $K$ is positive and finite. Thus, as $\ln(\prod_{i=1}^ne^{p_i}(e-cp_i)^e{\frac{e}{c}})$ is continuous, by the Intermediate Value Theorem \cite{ghosh2015}, we thereby can conclude that the curve  $y=\ln(\prod_{i=1}^ne^{p_i}(e-cp_i)^{\frac{e}{c}})$ intersects the curve $y=K$ at least once for $c\in(0,\infty)$, implying that we have at least one solution of \eqref{implim} ($\dagger_1$). 

Thereon, solving equation \eqref{partc}, we get 
\begin{flalign}
\frac{\partial\mathcal{L}(c,\lambda)}{\partial c}=0\nonumber&&&\\\nonumber
&\\\nonumber\Leftrightarrow 
\frac{\partial (\sum_{i=1}^n \ln(
e-cp_i)(1-\frac{\lambda e}{c})-\lambda\sum_{i=1}^np_i+K)}{\partial
c}=0&&\\\nonumber\nonumber
&\\\nonumber\Leftrightarrow 
\frac{\partial (\sum_{i=1}^n \ln(e-cp_i)(1-\frac{\lambda e}{c}))}{\partial c}=0&&\\\nonumber
&\\\nonumber\Leftrightarrow 
\sum_{i=1}^n\frac{-p_i}{e-cp_i}(1-\frac{\lambda e}{c})+\frac{\lambda e}{c^2}\sum_{i=1}^n\ln(e-cp_i)=0 \label{lambsol}&&\\
\end{flalign}
Equation \eqref{lambsol} is linear in $\lambda$, implying for every given $c$, we will find a $\lambda$ satisfying \eqref{lambsol} $(\dagger_2)$. 

Therefore, combining ($\dagger_1$) and ($\dagger_2$), we conclude that there is at least one optimal choice of $f(\cdot,\cdot)$ in $\mathcal{F}$ that maximizes the information gathered by the data consumer, subject to the fixed budget.\\

\noindent
{\bf Example \ref{flinear}} \;\;\; Let $\mathcal{F}=\{1-cp: c \in \mathbb{R}^+\}$.
We want to maximize 
\[
F(c)=\sum_{i=1}^nf(p_i)=\sum_{i=1}^n(1-cp_i)
\]
subject to 
\[
G(c)=\sum_{i=1}^n\mu(p_i)=\sum_{i=1}^n[p_{i}(1-cp_i)+\int_{p_{i}}^{\frac{1}{c}}(1-cz)dz]=B\,,
\]
for a fixed budget  $B\in \mathbb{R^+}$.

Observe that   $G(c)=\frac{n}{2c}-\frac{c}{2}\sum_{i=1}^np_i^2$.  
Using the method  of Lagrange multipliers, we define the Lagrangian function $\mathcal{L}(c,\lambda)=F(c)-\lambda(G(c)-B)=-c\sum_{i=1}^np_i-\lambda(\frac{n}{2c}-\frac{c}{2}\sum_{i=1}^np_i^2-B)$. Thus we have
\begin{equation}
    \mathcal{L}(c,\lambda)=-c\sum_{i=1}^np_i-\lambda(\frac{n}{2c}-\frac{c}{2}\sum_{i=1}^np_i^2-B)
\end{equation}

Hence, to find the optimal solution of $c$, we solve the following: 
\begin{equation}
    \frac{\partial\mathcal{L}(c,\lambda)}{\partial\lambda}=0
    \label{partlamblin}
\end{equation}

\begin{equation}
    \frac{\partial\mathcal{L}(c,\lambda)}{\partial c}=0
    \label{partclin}
\end{equation}

Solving equation \eqref{partlamblin}, we get
\begin{flalign}
\frac{\partial\mathcal{L}(c,\lambda)}{\partial \lambda}=0\nonumber&&&\\\nonumber
&\\\nonumber\Leftrightarrow -(\frac{n}{2c}-\frac{c}{2}\sum_{i=1}^np_i^2-B)=0&&\\\nonumber\nonumber
&\\\nonumber\Leftrightarrow \frac{n}{2c}-\frac{c}{2}\sum_{i=1}^np_i^2=B&&\\\nonumber
&\\\nonumber\Leftrightarrow 
c^2\sum_{i=1}^np_i^2+2Bc-n=0\label{implimlin}&&\\
\end{flalign}

Note that the LHS of equation \eqref{implimlin} is a quadratic equation in $c$, and as $4B^2+4n\sum_{i=1}^np_i^2>0$, equation \eqref{implimlin} has two distinct roots, or candidates for choosing the optimal $c$ $(\dagger_1')$.

Thereon, solving equation \eqref{partclin}, we get 
\begin{flalign}
\frac{\partial\mathcal{L}(c,\lambda)}{\partial c}=0\nonumber&&&\\\nonumber
&\\\nonumber\Leftrightarrow 
-\sum_{i=1}^np_i+\frac{\lambda n}{2c^2}+\frac{\lambda}{2}\sum_{i=1}^np_i^2=0 \label{lambsollin}&&\\
\end{flalign}
Equation \eqref{lambsollin} is linear in $\lambda$, implying for every given $c$, we will find a $\lambda$ satisfying \eqref{lambsollin} $(\dagger_2^')$. 

Therefore, combining ($\dagger_1^'$) and ($\dagger_2^'$), we conclude that there is at least one optimal choice of $f(\cdot,\cdot)$ in $\mathcal{F}_1$ that maximizes the information gathered by the data consumer, subject to the fixed budget.

\end{subappendices}
\end{document}